\documentclass[11pt]{article}
\usepackage[a4paper,margin=1in]{geometry}
\usepackage[T1]{fontenc}
\usepackage[utf8]{inputenc}
\usepackage{lmodern}
\usepackage{microtype}
\usepackage{amsmath,amsthm,amssymb,mathtools}
\usepackage{enumitem}
\usepackage{hyperref}
\usepackage[nameinlink,noabbrev]{cleveref}
\usepackage{mathrsfs}
\usepackage[backend=biber,style=numeric,sorting=none]{biblatex}
\hypersetup{colorlinks=true,linkcolor=blue,citecolor=blue,urlcolor=blue}

\usepackage{aliascnt}

\newtheorem{theorem}{Theorem}[section]

\newaliascnt{proposition}{theorem}
\newtheorem{proposition}[proposition]{Proposition}
\aliascntresetthe{proposition}
\crefname{proposition}{proposition}{propositions}
\Crefname{proposition}{Proposition}{Propositions}

\newaliascnt{lemma}{theorem}
\newtheorem{lemma}[lemma]{Lemma}
\aliascntresetthe{lemma}
\crefname{lemma}{lemma}{lemmas}
\Crefname{lemma}{Lemma}{Lemmas}

\newaliascnt{corollary}{theorem}
\newtheorem{corollary}[corollary]{Corollary}
\aliascntresetthe{corollary}
\crefname{corollary}{corollary}{corollaries}
\Crefname{corollary}{Corollary}{Corollaries}

\newaliascnt{problem}{theorem}

\aliascntresetthe{problem}
\crefname{problem}{problem}{problems}
\Crefname{problem}{Problem}{Problems}

\newaliascnt{definition}{theorem}
\newtheorem{definition}[definition]{Definition}
\aliascntresetthe{definition}
\crefname{definition}{definition}{definitions}
\Crefname{definition}{Definition}{Definitions}

\newaliascnt{remark}{theorem}

\aliascntresetthe{remark}
\crefname{remark}{remark}{remarks}
\Crefname{remark}{Remark}{Remarks}

\newcommand{\tr}{\operatorname{Tr}}

\newcommand{\sgn}{\operatorname{sign}}
\newcommand{\C}{\mathbb{C}}
\newcommand{\R}{\mathbb{R}}

\newcommand{\ket}[1]{|#1\rangle}
\newcommand{\bra}[1]{\langle #1|}
\newcommand{\lp}{\mathtt{p}}
\newcommand{\lt}{\mathtt{t}}

\title{The State Cost of Classical Simulation of One-Way General Quantum Finite Automata}
\author{Zeyu Chen and Junde Wu}
\date{}

\begin{document}
\maketitle

\begin{abstract}
Under strict cutpoints, probabilistic finite automata (PFAs) and one-way general quantum finite automata (1gQFAs) recognize the same stochastic languages, shifting the theoretical focus to the state cost required for a classical PFA to simulate a 1gQFA. For an $n$-state ($n\geq 2$) 1gQFA, the state cost upper bound of classical simulation was previously known to be $n^2+3$, while the lower bound remained an open problem widely conjectured to be quadratic. After establishing a well-defined notion of classical simulation, we improve the existing state cost upper bound from $n^2+3$ to $n^2+1$. Subsequently, we introduce the concepts of shattering and dynamic shattering, which are used to determine the memory required for a PFA to recognize a language. Using these techniques, we prove that the state cost lower bound of the simulation reaches $n^2+1$ over a four-letter alphabet. With the upper and lower bounds thus matching, the problem is fully resolved.
\end{abstract}

\noindent\textbf{Keywords:} quantum finite automata; probabilistic finite automata; classical simulation cost; state complexity; formal languages; dynamic shattering

\section{Introduction}
Finite automata provide a clean setting for separating computational power from memory cost: the internal memory is finite, the transition rules are explicit, and the number of states is the standard descriptional resource. Quantum finite automata (QFAs) replace classical transitions by quantum evolution and measurement, so quantum effects can change both what is recognizable and how compactly a computation can be represented \cite{moore2000,kondacs1997,sayyakaryilmaz2014,ambainisyakaryilmaz2021}. Finite-state ideas also appear in concrete quantum implementations and control-oriented constructions, including photonic QFA realizations and quantum Braitenberg vehicles built from finite automata \cite{mereghetti2020,mishra2020}.

For one-way quantum automata, comparisons with classical finite automata naturally separate three questions. \emph{Recognition power} asks which languages a model can recognize. \emph{State succinctness} asks, on languages available to both models, how compactly the same task can be represented. In the bounded-error setting, broader finite-state models can exceed the recognition power of PFAs: two-way QFAs can recognize certain nonregular languages in polynomial time, while affine finite automata are strictly more powerful than PFAs in language recognition \cite{kondacs1997,diazcaroyakaryilmaz2016}. By contrast, restricted measure-once and measure-many one-way QFA models exhibit recognition restrictions while still achieving substantial state savings on suitable language families \cite{kondacs1997,ambainis1998,brodsky2002}, and one-way semi-quantum and hybrid quantum--classical variants exhibit further state tradeoffs and exponential succinctness \cite{zheng2014,qiu2015,xiaoqiu2025}. These results separate two sources of quantum--classical difference: the models may recognize different language classes, or quantum states may represent a shared language more compactly. When a model offers neither a recognition advantage nor exponential succinctness, a third question remains: how many classical states are required in the worst case to replace a quantum automaton while preserving its recognized language?

The strict-cutpoint regime isolates this \emph{classical simulation cost} particularly cleanly for 1gQFAs. Under strict-cutpoint recognition, 1gQFAs and PFAs recognize exactly the same class of stochastic languages \cite{yakaryilmaz2011,li2012,turakainen1969}. The simulation cost therefore isolates a purely descriptional question: how much probabilistic finite-state memory is needed to replace quantum mixed-state memory without changing the language? Determining the exact cost identifies which part of the quantum state-space geometry becomes an irreducible classical memory requirement and turns the asymptotic separation into an exact state bound.

Mixed-state linearization gives an $n^2$-state GFA representation of every $n$-state 1gQFA \cite{yakaryilmaz2011,li2012}. Combined with the standard GFA--PFA correspondence, this yields the previously known upper bound of $n^2+3$ PFA states \cite{turakainen1975,hirvensaloyakaryilmaz2016,rote2025}. In this paper, we sharpen the zero-cutpoint conversion, showing that a $k$-state zero-cutpoint GFA requires only $k+1$ PFA states. This improves the universal upper bound for 1gQFAs to $n^2+1$. 

The matching lower bound remained an open problem and was widely conjectured to be quadratic. To resolve it, we introduce shattering and dynamic shattering, which translate prefix--suffix acceptance patterns into lower bounds on PFA state complexity, and construct an $n$-state 1gQFA over a fixed four-letter alphabet that realizes the required shattering structure by exploiting the geometry of mixed quantum states and persistent quantum oscillation. This gives the matching lower bound $n^2+1$, and hence
\[
C_n=n^2+1\qquad(n\ge2).
\]

The remainder of the paper is organized as follows. \Cref{sec:prelim} introduces the preliminaries and the PFA, GFA, and 1gQFA models used throughout. \Cref{sec:upper} develops the improved GFA-to-PFA conversion and derives the $n^2+1$ universal upper bound. \Cref{sec:lower} introduces shattering and dynamic shattering and constructs the fixed-alphabet 1gQFA attaining the matching lower bound. \Cref{sec:conclusion} concludes the paper and outlines directions for future work.

\section{Automaton Models and State Complexity}\label{sec:prelim}
\subsection{Conventions and state complexity}
Throughout, $\Sigma$ is a finite input alphabet and $\Sigma^\ast$ is the free monoid over $\Sigma$. If $w=\sigma_1\cdots\sigma_\ell\in\Sigma^\ast$, then $|w|=\ell$. For a positive integer $m$, the probability simplex is denoted by
\[
\Delta^{m-1}=\{x\in\R^m:x_i\ge 0\text{ for all }i,\ \sum_{i=1}^m x_i=1\}.
\]
All probabilistic computations are written in row-vector form, so that stochastic matrices act on the right.

Throughout, languages are recognized with \emph{strict cutpoints}. For a machine $\mathcal A$ over $\Sigma$ with a real-valued output function $f_{\mathcal A}:\Sigma^\ast\to\R$, we define
\[
L(\mathcal A,\lambda)=\{w\in\Sigma^\ast:f_{\mathcal A}(w)>\lambda\}.
\]
For PFAs and QFAs, $f_{\mathcal A}$ is the acceptance-probability function. This is the usual unbounded-error notion \cite{rabin1963,paz1971,yakaryilmaz2010,yakaryilmaz2011}. Using this strict-cutpoint recognition, we define the classical simulation cost.

\begin{definition}
For each positive integer $n$, the \emph{worst-case classical simulation cost} $C_n$ is defined as the smallest integer such that for any $n$-state 1gQFA $Q$ and strict cutpoint $\lambda\in[0,1)$, there exists a one-way PFA $P$ with at most $C_n$ states and a strict cutpoint $\mu\in[0,1)$ satisfying
\[
L(P,\mu)=L(Q,\lambda).
\]
\end{definition}

Let $\mathcal H\cong\C^n$ be an $n$-dimensional complex Hilbert space. The set of all linear operators on $\mathcal H$, denoted by $\mathcal L(\mathcal H)$, forms a complex vector space equipped with the Hilbert--Schmidt inner product $\langle X,Y\rangle=\tr(X^\dagger Y)$. An operator $X\in\mathcal L(\mathcal H)$ is \emph{Hermitian} if $X=X^\dagger$. The collection of all Hermitian operators, denoted by $\operatorname{Herm}(\mathcal H)$, forms a real vector space. A \emph{density operator} $\rho$ is a positive semidefinite Hermitian operator of trace $1$. The real vector space $\operatorname{Herm}(\mathcal H)$ has dimension $n^2$. Its trace-one affine hyperplane has dimension $n^2-1$, and the density operators have this hyperplane as their affine hull \cite{watrous2018}.

\subsection{Automaton models}
A \emph{generalized finite automaton} (GFA) over $\Sigma$ is a tuple
\[
G=(S,\Sigma,u,\{A_\sigma\}_{\sigma\in\Sigma},v),
\]
where $S$ is a finite state set with $|S|=k$, $u\in\R^{1\times k}$ is an initial row vector, each $A_\sigma\in\R^{k\times k}$ is a real transition matrix, and $v\in\R^{k\times 1}$ is a final column vector, all indexed by $S$. For a word $w=\sigma_1\cdots\sigma_m$, its value is
\[
f_G(w)=uA_{\sigma_1}\cdots A_{\sigma_m}v.
\]
Given a strict cutpoint $\lambda\in\R$, the recognized language is
\[
L(G,\lambda)=\{w\in\Sigma^\ast:f_G(w)>\lambda\}.
\]
This model is standard in the theory of stochastic languages \cite{turakainen1969,paz1971}. Turakainen's classical results imply that GFAs and PFAs recognize the same family of strict-cutpoint languages.

A \emph{one-way probabilistic finite automaton} (PFA) is taken in the standard end-marker model \cite{rabin1963,paz1971}. A PFA over $\Sigma$ is a tuple
\[
P=(S,\Sigma,p_0,\{P_\sigma\}_{\sigma\in\Sigma},P_{\#},F),
\]
where $S$ is a finite state set, $p_0$ is an initial distribution on $S$, each $P_\sigma$ is a row-stochastic transition matrix on $S$, $P_{\#}$ is a row-stochastic end-marker matrix on $S$, and $F\subseteq S$ is the accepting set. If $w=\sigma_1\cdots\sigma_m$, then
\[
f_P(w)=p_0 P_wP_{\#}\mathbf 1_F,
\qquad
P_w=P_{\sigma_1}\cdots P_{\sigma_m},
\]
where $\mathbf 1_F$ is the indicator column vector of $F$. Under a strict cutpoint $\mu\in[0,1)$, the recognized language is
\[
L(P,\mu)=\{w\in\Sigma^\ast:f_P(w)>\mu\}.
\]
The end marker permits arbitrary stochastic postprocessing before the accepting set is read.

The quantum model used throughout is the one-way general quantum finite automaton (1gQFA) of Li et al. \cite{li2012}. Fix an $n$-dimensional Hilbert space $\mathcal H\cong\C^n$. A 1gQFA over $\Sigma$ is a tuple
\[
Q=(\mathcal H,\Sigma,\rho_0,\{\mathcal E_\sigma\}_{\sigma\in\Sigma},P_{\mathrm{acc}}),
\]
where $\rho_0$ is an initial density operator on $\mathcal H$, each $\mathcal E_\sigma$ is a completely positive trace-preserving map on $\mathcal L(\mathcal H)$, and $P_{\mathrm{acc}}$ is the projector onto the accepting subspace of $\mathcal H$. With $P_{\mathrm{rej}}=I-P_{\mathrm{acc}}$, the final measurement is the projective measurement $\{P_{\mathrm{acc}},P_{\mathrm{rej}}\}$. On input $w=\sigma_1\cdots\sigma_m$, the final state is
\[
\rho_w=\mathcal E_{\sigma_m}\circ\cdots\circ\mathcal E_{\sigma_1}(\rho_0),
\]
and the acceptance probability is
\[
f_Q(w)=\tr(P_{\mathrm{acc}}\rho_w).
\]
The number of states of $Q$ is $\dim\mathcal H=n$. Under a strict cutpoint $\lambda\in[0,1)$, the recognized language is
\[
L(Q,\lambda)=\{w\in\Sigma^\ast:f_Q(w)>\lambda\}.
\]

A standard mixed-state linearization represents the action of an $n$-state 1gQFA on $\operatorname{Herm}(\mathcal H)$ by real $n^2\times n^2$ matrices. Consequently, every $n$-state 1gQFA acceptance function has an $n^2$-dimensional real linear representation, equivalently an $n^2$-state GFA representation \cite{yakaryilmaz2011,li2012}.

\section{The Universal Upper Bound}\label{sec:upper}
The classical equivalence between GFAs and PFAs \cite{turakainen1969,paz1971,turakainen1975,hirvensaloyakaryilmaz2016,rote2025} provides a standard $(n^2+3)$-state simulation of an $n$-state 1gQFA. One state is saved by absorbing the cutpoint shift, while a second state is saved by reducing the stochastic overhead of the GFA-to-PFA conversion.

\subsection{A one-state zero-cutpoint GFA-to-PFA conversion}

\begin{proposition}\label{prop:zero-gfa-pfa}
Let $G$ be a $k$-state GFA over $\Sigma$. Then there is a one-way PFA $P$ over the same alphabet with at most $k+1$ states such that
\[
L(P,1/(k+1))=L(G,0).
\]
\end{proposition}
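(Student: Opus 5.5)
We use a Turakainen-style shift-and-scale construction. The idea is to embed the $k$-dimensional GFA dynamics into the zero-sum hyperplane of $\R^{k+1}$. Every PFA matrix will be a small perturbation of the rank-one "uniformizing" matrix $J/(k+1)$, where $J$ is the all-ones matrix. The deviation of the state distribution from the uniform row $e=\mathbf 1^T/(k+1)$ then carries a scaled copy of $uA_w$, and the acceptance probability becomes $1/(k+1)$ plus a positive multiple of $f_G(w)$. The $k+1$ states are the $k$ GFA states plus one extra state, which we also use as the unique accepting state, $F=\{k+1\}$.

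\textbf{Step 1 (doubly zero-sum lifts).} For each $\sigma$ define the $(k+1)\times(k+1)$ matrix
\[
B_\sigma=\begin{pmatrix} A_\sigma & -A_\sigma\mathbf 1\\ -\mathbf 1^TA_\sigma & \mathbf 1^TA_\sigma\mathbf 1\end{pmatrix}.
\]
All of its row sums and column sums vanish. A direct check gives $(x,t)B_\sigma=\bigl((x-t\mathbf 1^T)A_\sigma,\,-(x-t\mathbf 1^T)A_\sigma\mathbf 1\bigr)$. Hence, writing $\pi(x,t)=x-t\mathbf 1^T$, we have $\pi\bigl((x,t)B_\sigma\bigr)=\pi(x,t)A_\sigma$. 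For the end marker, take
\[
C=\begin{pmatrix} -v\mathbf 1^T/k & v\\ (\mathbf 1^Tv)\mathbf 1^T/k & -\mathbf 1^Tv\end{pmatrix}.
\]
This matrix is also doubly zero-sum and satisfies $(x,t)C\,\mathbf 1_F=\pi(x,t)v$.

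\textbf{Step 2 (stochastic matrices and initial vector).} Set $P_\sigma=J/(k+1)+\varepsilon B_\sigma$ and $P_\#=J/(k+1)+\eta C$, with $\varepsilon,\eta>0$ small enough that all entries are nonnegative. Zero row sums make these matrices row-stochastic. For the initial distribution, choose $t_0=-u\mathbf 1/(k+1)$ and $x_0=u+t_0\mathbf 1^T$, so that $y_0=(x_0,t_0)$ has coordinate sum zero and $\pi(y_0)=u$. Then set $p_0=e+\delta y_0$ with $\delta>0$ small enough for nonnegativity.

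\textbf{Step 3 (evolution and readout).} Any zero-sum row vector $y$ satisfies $yJ=0$, and $eB_\sigma=0$ by zero column sums. By induction on $|w|$, this gives $p_0P_w=e+\delta\varepsilon^{|w|}y_w$ with $\pi(y_w)=uA_w$, and $y_w$ stays zero-sum because $B_\sigma$ has zero row sums. Applying $P_\#$ gives $e+\delta\varepsilon^{|w|}\eta\, y_wC$, again using $eC=0$. Therefore
\[
f_P(w)=\frac1{k+1}+\delta\eta\,\varepsilon^{|w|}\,uA_wv=\frac1{k+1}+\delta\eta\,\varepsilon^{|w|}f_G(w),
\]
which holds for the empty word as well. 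Since $\delta\eta\varepsilon^{|w|}>0$, we get $f_P(w)>1/(k+1)$ if and only if $f_G(w)>0$, which proves the proposition.

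The only delicate point is Step 1: a single extra state must simultaneously absorb zero row sums (needed for stochasticity), zero column sums (so that the uniform part $e$ is invariant and never pollutes the deviation), and a faithful readout of $uA_wv$. The projection $\pi(x,t)=x-t\mathbf 1^T$ is what makes all three compatible, so that part of the verification deserves the most care. The remaining steps are routine sign and nonnegativity checks.
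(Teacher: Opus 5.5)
\textbf{Gap.} The Step 1 identity $\pi\bigl((x,t)B_\sigma\bigr)=\pi(x,t)A_\sigma$ is false, and the construction fails in general for every $k\ge2$. Your formula $(x,t)B_\sigma=(zA_\sigma,\,-zA_\sigma\mathbf 1)$ with $z=\pi(x,t)$ is correct. But applying $\pi$ to it \emph{adds} the term $(zA_\sigma\mathbf 1)\mathbf 1^T$ instead of cancelling it, so
\[
\pi\bigl((x,t)B_\sigma\bigr)=zA_\sigma M,\qquad M=I+\mathbf 1\mathbf 1^T .
\]
Equivalently, $B_\sigma=\Pi A_\sigma\Pi^T$ with $\Pi=\begin{pmatrix}I\\-\mathbf 1^T\end{pmatrix}$ and $(x,t)\Pi=\pi(x,t)$, but $\Pi^T\Pi=M\neq I$. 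Your Step 3 therefore actually yields
\[
f_P(w)-\tfrac1{k+1}=\delta\eta\varepsilon^{|w|}\,uA_{\sigma_1}MA_{\sigma_2}M\cdots A_{\sigma_m}Mv,
\]
which is the value of a different GFA. For $k=1$ we have $M=2$, so the sign survives by accident. For $k=2$, take one letter with $A_\sigma=I$, $u=(1,0)$, $v=(1,-3)^T$. Then $f_G(\sigma)=uv=1>0$, while $uMv=2-3=-1<0$. So $\sigma\in L(G,0)$ but $\sigma\notin L(P,1/3)$. Padding with zeros gives the same failure for every $k\ge2$.

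\textbf{Repair and comparison.} Put $A_\sigma M^{-1}=A_\sigma\bigl(I-\tfrac1{k+1}\mathbf 1\mathbf 1^T\bigr)$ in place of $A_\sigma$ inside your block formula. That block pattern is doubly zero-sum for any $k\times k$ matrix. Now $\pi\bigl((x,t)B_\sigma\bigr)=zA_\sigma M^{-1}M=zA_\sigma$, so your initial vector, your end-marker matrix $C$, and the expansion around $e$ go through verbatim, giving $f_P(w)=\tfrac1{k+1}+\delta\eta\varepsilon^{|w|}f_G(w)$. With this correction your argument is a genuine alternative to the paper's. The paper avoids explicit coordinates: it chooses an abstract isomorphism $T$ from $\R^{1\times k}$ onto the zero-sum hyperplane, adapted to $v$ so that $T(r)e_1=rv$. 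It then defines $K_\sigma$ directly by $T(r)K_\sigma=T(rA_\sigma)$, which builds in the intertwining you tried to obtain by hand, and reads off one coordinate under the identity end marker. You instead keep a fixed coordinate embedding and move the readout into a perturbed end marker. That is more explicit and needs no $v\neq0$ normalization, but as written it does not prove the proposition.
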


\begin{proof}
Write $G=(S,\Sigma,u,\{A_\sigma\}_{\sigma\in\Sigma},v)$. Without loss of generality, assume $v\ne0$. Set $m=k+1$, and let
\[
q=\frac1m(1,\ldots,1)\in\Delta^{m-1},
\qquad
W=\{z\in\R^{1\times m}:z\mathbf 1=0\}.
\]
The space $W$ has dimension $k$. The linear functionals $r\mapsto rv$ on $\R^{1\times k}$ and $z\mapsto ze_1$ on $W$ are both nonzero, so there is a linear isomorphism
\[
T:\R^{1\times k}\longrightarrow W
\]
such that
\begin{equation}\label{eq:T-pfa}
T(r)e_1=rv
\qquad(r\in\R^{1\times k}).
\end{equation}

For each $\sigma\in\Sigma$, define a linear operator $K_\sigma$ on $\R^{1\times m}=\operatorname{span}\{q\}\oplus W$ by
\[
qK_\sigma=0,
\qquad
T(r)K_\sigma=T(rA_\sigma).
\]
Its image lies in $W$, hence $K_\sigma\mathbf 1=0$. Let $Q$ be the $m\times m$ matrix whose every row is $q$. Then $qQ=q$ and $zQ=0$ for every $z\in W$. Since $Q$ is strictly positive and $\Sigma$ is finite, there is a common $\varepsilon>0$ small enough that
\[
S_\sigma=Q+\varepsilon K_\sigma
\]
is entrywise nonnegative for every $\sigma$. Because $K_\sigma\mathbf 1=0$, each $S_\sigma$ is row-stochastic.

Choose $\delta>0$ small enough that
\[
p_0=q+\delta T(u)\in\Delta^{m-1}.
\]
Let $P$ have initial distribution $p_0$, transition matrices $S_\sigma$, identity end marker, and accepting set $F=\{1\}$. If $A_w=A_{\sigma_1}\cdots A_{\sigma_\ell}$, induction on $|w|$ gives
\[
p_w=q+\delta\varepsilon^{|w|}T(uA_w),
\]
because $qS_\sigma=q$ and $T(r)S_\sigma=\varepsilon T(rA_\sigma)$. Therefore, by \cref{eq:T-pfa},
\[
\begin{aligned}
f_P(w)-\frac1m
&=\delta\varepsilon^{|w|}T(uA_w)e_1\\
&=\delta\varepsilon^{|w|}uA_wv\\
&=\delta\varepsilon^{|w|}f_G(w).
\end{aligned}
\]
The multiplicative factor is strictly positive, so the two sides have the same sign and $L(P,1/m)=L(G,0)$.
\end{proof}

The construction has a simple geometric interpretation. The $(k+1)$-state probability simplex provides exactly the $k$ independent directions needed to embed the GFA dynamics. The uniform distribution $q$ acts as an interior origin, and the scaling factor $\varepsilon$ shrinks the unconstrained GFA transitions into small perturbations around $q$ to keep them strictly inside the simplex. Although this scaling attenuates the final measurement by a strictly positive factor $\delta\varepsilon^{|w|}$, the sign of the zero-cutpoint computation is perfectly preserved.

\subsection{From 1gQFAs to PFAs}
For 1gQFAs, the cutpoint can be absorbed into the final functional while retaining the standard $n^2$-dimensional mixed-state representation.

\begin{theorem}\label{thm:upper}
Let $Q$ be an $n$-state 1gQFA and let $\lambda\in[0,1)$. Then there exist a one-way PFA $P$ over the same alphabet, with at most $n^2+1$ states, and a strict cutpoint $\mu\in[0,1)$ such that
\[
L(P,\mu)=L(Q,\lambda).
\]
\end{theorem}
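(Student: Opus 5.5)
The plan is to build an $n^2$-state GFA $G$ with $f_G(w)=f_Q(w)-\lambda$ for every $w$, so that $L(G,0)=L(Q,\lambda)$. Then \cref{prop:zero-gfa-pfa} with $k=n^2$ gives a PFA with at most $n^2+1$ states and cutpoint $\mu=1/(n^2+1)\in[0,1)$. This proves the theorem.

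To build $G$, I would fix a real orthonormal basis $\{B_1,\ldots,B_{n^2}\}$ of $\operatorname{Herm}(\mathcal H)$ with respect to the Hilbert--Schmidt inner product, and take $B_1=I/\sqrt n$. Each $\mathcal E_\sigma$ is Hermiticity-preserving, being completely positive. Hence the map $X\mapsto\mathcal E_\sigma(X)$ restricted to $\operatorname{Herm}(\mathcal H)$ is a real-linear map. I would write it as a real $n^2\times n^2$ matrix $A_\sigma$ acting on coordinate row vectors: if $X$ has coordinates $x$, then $\mathcal E_\sigma(X)$ has coordinates $xA_\sigma$. This is consistent with the order $\rho_w=\mathcal E_{\sigma_m}\circ\cdots\circ\mathcal E_{\sigma_1}(\rho_0)$, since applying $\mathcal E_{\sigma_1}$ first corresponds to right-multiplying by $A_{\sigma_1}$ first. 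Let $u$ be the coordinate vector of $\rho_0$, so that the coordinates of $\rho_w$ are $uA_w$ with $A_w=A_{\sigma_1}\cdots A_{\sigma_m}$.

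The key step is absorbing the cutpoint. I would use the fact that every $\rho_w$ has trace one, which holds because each $\mathcal E_\sigma$ is trace preserving. Then
\[
f_Q(w)-\lambda=\tr\bigl((P_{\mathrm{acc}}-\lambda I)\rho_w\bigr).
\]
The operator $M=P_{\mathrm{acc}}-\lambda I$ is Hermitian, so the functional $X\mapsto\tr(MX)$ on $\operatorname{Herm}(\mathcal H)$ is real-linear. Its value is $x v$, where $v$ is the column vector with entries $v_j=\tr(MB_j)$. Setting $G=(S,\Sigma,u,\{A_\sigma\},v)$ with $|S|=n^2$ then gives $f_G(w)=uA_wv=f_Q(w)-\lambda$.

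One edge case needs checking: the proof of \cref{prop:zero-gfa-pfa} assumes $v\neq0$. If $v=0$, then $M=0$, which is impossible because $\lambda<1$ and $P_{\mathrm{acc}}$ is a projector. Alternatively, $f_G\equiv0$ gives the empty language, which a one-state PFA recognizes.

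The main obstacle is keeping the representation at exactly $n^2$ dimensions, with no extra coordinate for the constant $\lambda$. This is handled entirely by the trace-preservation identity $\lambda=\tr(\lambda I\,\rho_w)$, which folds the cutpoint into the final vector. The remaining work is routine verification that the linearization respects composition order and real-valuedness.
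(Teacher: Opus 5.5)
Your proposal is correct and is essentially the paper's proof: linearize the channels over a real basis of $\operatorname{Herm}(\mathcal H)$, absorb the cutpoint into the final vector via $v_j=\tr\bigl((P_{\mathrm{acc}}-\lambda I)B_j\bigr)$ using $\tr\rho_w=1$, and apply \cref{prop:zero-gfa-pfa} with $k=n^2$ to get $\mu=1/(n^2+1)$. One minor slip is that $M=P_{\mathrm{acc}}-\lambda I$ does vanish when $\lambda=0$ and $P_{\mathrm{acc}}=0$, so your claim that $M=0$ is impossible is false; this causes no gap, because your fallback (the empty language) covers that case and the proposition as stated already applies to every GFA.
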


\begin{proof}
By fixing a real basis $\{B_1, \ldots, B_{n^2}\}$ for $\operatorname{Herm}(\mathcal H)$, the standard mixed-state linearization \cite{yakaryilmaz2011,li2012} represents an $n$-state 1gQFA $Q$ by an $n^2$-state GFA
\[
G_Q=(S,\Sigma,u,\{A_\sigma\}_{\sigma\in\Sigma},v),
\]
where the standard choice $v_i=\tr(P_{\mathrm{acc}}B_i)$ gives $f_{G_Q}(w)=f_Q(w)$. Since every reachable density operator satisfies $\tr(\rho_w)=1$, we instead choose
\[
v_i=\tr\bigl((P_{\mathrm{acc}}-\lambda I)B_i\bigr),
\]
which gives $f_{G_Q}(w)=f_Q(w)-\lambda$. Hence
\[
L(G_Q,0)=L(Q,\lambda),
\]
with the same $n^2$ states and the same alphabet.

Applying \Cref{prop:zero-gfa-pfa} with $k=n^2$ gives a PFA with at most $n^2+1$ states recognizing the same language at cutpoint $1/(n^2+1)$.
\end{proof}

The trace-one identity is what makes the improvement possible: it moves the quantum cutpoint to zero inside the existing mixed-state representation, after which the one-state stochasticization applies directly. Hence
\begin{equation}\label{eq:C-upper}
C_n\le n^2+1.
\end{equation}
The fixed-alphabet construction in \Cref{sec:lower} attains this bound.

\section{Dynamic Shattering and Lower Bound Construction}\label{sec:lower}
We first introduce two notions of shattering for deriving state lower bounds. Shattering uses suffixes to realize every possible acceptance/rejection pattern on a selected set of prefixes, translating language distinctions into constraints on any automaton recognizing the language. Dynamic shattering strengthens this requirement by imposing persistent changes across the cutpoint, thereby improving the corresponding PFA lower bound by one state. Using these tools, we then construct a carefully designed fixed-alphabet language recognized by an $n$-state 1gQFA and prove the exact lower bound $n^2+1$.

\begin{definition}\label{def:shattering}
Let $L\subseteq\Sigma^\ast$. The language $L$ \emph{shatters} prefixes $x_1,\ldots,x_N$ if, for every label vector $\ell=(\ell_1,\ldots,\ell_N)\in\{\pm1\}^N$, there is a suffix $y_\ell$ such that
\[
x_i y_\ell\in L\iff \ell_i=+1
\qquad(1\le i\le N).
\]
For a letter $\mathtt{a}\in\Sigma$, the language $L$ \emph{dynamically shatters} these prefixes with respect to $\mathtt{a}$ if the suffixes $y_\ell$ can be chosen so that, in addition, for every $\ell$ and every integer $d\ge1$, the sequence
\[
\mathtt{a}^{qd}y_\ell,\qquad q=0,1,2,\ldots,
\]
contains infinitely many words in $L$ and infinitely many words outside $L$.
\end{definition}

\begin{lemma}\label{lem:shattering}
Let $L\subseteq\Sigma^\ast$ and let $x_1,\ldots,x_N$ be prefixes.
\begin{enumerate}
\item If $L$ shatters $x_1,\ldots,x_N$, then every PFA recognizing $L$ under a strict cutpoint has at least $N$ states.
\item If $L$ dynamically shatters $x_1,\ldots,x_N$ with respect to some letter $\mathtt{a}$, then every PFA recognizing $L$ under a strict cutpoint has at least $N+1$ states.
\end{enumerate}
\end{lemma}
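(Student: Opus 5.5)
The plan is to pass to the linear-algebraic picture of a PFA and show that shattering forces linear independence of prefix state vectors. In the dynamic case, I will show that the test functionals are additionally confined to a hyperplane, which costs one dimension.

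\textbf{Setup.} Let $P=(S,\Sigma,p_0,\{P_\sigma\},P_{\#},F)$ recognize $L$ with strict cutpoint $\mu$, and put $s=|S|$. For a prefix $x$ write $p_x=p_0P_x\in\Delta^{s-1}$. For a suffix $y$ write
\[
w_y=P_yP_{\#}\mathbf 1_F-\mu\mathbf 1\in\R^{s\times1}.
\]
Every $p_x$ has coordinate sum $1$, so $f_P(xy)-\mu=p_xw_y$. Hence $xy\in L$ if and only if $p_xw_y>0$. Acceptance has thus become the sign of a \emph{homogeneous} linear functional evaluated at the points $p_{x_i}\in\R^{1\times s}$.

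\textbf{Core linear-dependence argument.} Suppose the vectors $p_{x_1},\ldots,p_{x_N}$ are linearly dependent modulo a subspace $U^\perp$, where $U\subseteq\R^{s\times1}$ contains every $w_{y_\ell}$. Concretely, suppose there is $c\neq0$ with $\sum_ic_ip_{x_i}w=0$ for all $w\in U$. Replacing $c$ by $-c$ if necessary, we may assume some $c_i>0$. Choose the label $\ell_i=+1$ when $c_i>0$ and $\ell_i=-1$ otherwise, and evaluate at $w=w_{y_\ell}$. Each term with $c_i>0$ is strictly positive. Each term with $c_i<0$ has $p_{x_i}w\le0$, so it is nonnegative. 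Terms with $c_i=0$ contribute nothing. The sum is therefore strictly positive, contradicting that it equals $0$. Hence the restrictions of $p_{x_1},\ldots,p_{x_N}$ to $U$ are linearly independent, which gives $N\le\dim U$. Taking $U=\R^{s\times1}$ yields part 1, namely $N\le s$.

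\textbf{Dynamic case.} For part 2 I will show that all $w_{y_\ell}$ lie in a hyperplane $\pi^\perp$ with $\pi\ne0$. Then applying the argument with $U=\pi^\perp$ gives $N\le s-1$.

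The construction of $\pi$ is the step needing care.
\begin{enumerate}
\item For any stochastic matrix $P_{\mathtt a}$ there is a period $d_0\ge1$ such that $P_{\mathtt a}^{qd_0}$ converges as $q\to\infty$. This follows from the standard Perron--Frobenius / Markov-chain periodicity decomposition: taking $d_0$ to be a common multiple of the periods of the recurrent classes makes every class aperiodic for $P_{\mathtt a}^{d_0}$.
\item Since dynamic shattering holds for every $d\ge1$, I may take $d=d_0$.
\item Then $p_0P_{\mathtt a}^{qd_0}\to\pi$, where $\pi$ is a probability vector and hence nonzero.
\item Consequently $f_P(\mathtt a^{qd_0}y_\ell)-\mu=p_0P_{\mathtt a}^{qd_0}w_{y_\ell}\to\pi w_{y_\ell}$.
\item If $\pi w_{y_\ell}>0$, then all but finitely many words of the sequence lie in $L$. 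If $\pi w_{y_\ell}<0$, then all but finitely many lie outside $L$. Either case contradicts dynamic shattering, so $\pi w_{y_\ell}=0$ for every $\ell$.
\end{enumerate}

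With $U=\{w:\pi w=0\}$, which has dimension $s-1$, the core argument gives $N\le s-1$, that is, $s\ge N+1$. The main point to verify is the existence of the convergent subsequence of powers. I will cite or briefly justify it through the Jordan structure of $P_{\mathtt a}$: its peripheral eigenvalues are roots of unity and are semisimple, so raising to the power $d_0$ turns them all into $1$.
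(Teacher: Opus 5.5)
Your proposal is correct and follows the paper's proof essentially step for step. Both arguments use the sign-pattern contradiction against a nontrivial linear relation, and both take the limit $\pi=\lim_q p_0P_{\mathtt a}^{qd_0}$ along a period-clearing subsequence, where the oscillation condition forces $\pi w_{y_\ell}=0$. The one cosmetic difference is that you restrict to the hyperplane $\pi^\perp$ (equivalently, you quotient the prefix vectors by $\operatorname{span}(\pi)$), while the paper translates to $z_i=\delta_i-\omega$ inside the zero-sum hyperplane $W$; these are dual forms of the same dimension count.
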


\begin{proof}
Let an $m$-state PFA recognize $L$ at cutpoint $\mu$. For each label vector $\ell$, choose a shattering suffix $y_\ell$. Let $\delta_i$ be the row distribution reached after $x_i$, and absorb $y_\ell$ and the right end marker into the column
\[
b_\ell=P_{y_\ell}P_{\#}\mathbf1_F\in[0,1]^m.
\]
Set $c_\ell=b_\ell-\mu\mathbf1$. The shattering condition gives
\[
\ell_i=+1\Rightarrow \delta_i c_\ell>0,
\qquad
\ell_i=-1\Rightarrow \delta_i c_\ell\le0.
\]
Suppose that $\sum_i\alpha_i\delta_i=0$ is a nontrivial linear relation. Reversing all coefficients if necessary, assume that some $\alpha_i>0$, and choose $\ell_i=+1$ exactly when $\alpha_i>0$. Then $\alpha_i(\delta_i c_\ell)\ge0$ for every $i$, with strict inequality whenever $\alpha_i>0$. Right-multiplying the relation by $c_\ell$ therefore gives
\[
0=\sum_i\alpha_i(\delta_i c_\ell)>0,
\]
a contradiction. Hence $\delta_1,\ldots,\delta_N$ are linearly independent in $\mathbb R^m$, proving $m\ge N$.

Now further assume dynamic shattering with respect to $\mathtt a$. Let $p_0$ be the initial row distribution and let $S$ be the transition matrix for $\mathtt a$. The finite Markov-chain decomposition gives an integer $d\ge1$ for which $S^{qd}$ converges as $q\to\infty$: take $d$ to be a common multiple of the periods of the recurrent classes; the transient part then decays, and each recurrent class converges along the $d$-step subsequence. Define
\[
\omega=\lim_{q\to\infty}p_0S^{qd}.
\]
For each $\ell$,
\[
p_0S^{qd}b_\ell=f_P(\mathtt{a}^{qd}y_\ell)\longrightarrow\omega b_\ell.
\]
Dynamic shattering makes this convergent sequence exceed $\mu$ infinitely often and stay at most $\mu$ infinitely often, so its limit is $\mu$. Thus $\omega c_\ell=0$ for every $\ell$.

Set $z_i=\delta_i-\omega$. Then
\[
z_i\mathbf1=0,
\qquad
z_i c_\ell=\delta_i c_\ell.
\]
Hence the $z_i$ satisfy the same sign-separation relations as the $\delta_i$, so the preceding sign argument shows that $z_1,\ldots,z_N$ are linearly independent. They all lie in
\[
W=\{z\in\mathbb R^{1\times m}:z\mathbf1=0\},
\qquad \dim W=m-1,
\]
so $N\le m-1$ and therefore $m\ge N+1$.
\end{proof}

Dynamic shattering is naturally realizable by quantum automata because unitary evolution can sustain non-decaying oscillations, whereas finite probabilistic dynamics becomes convergent after passing to a suitable arithmetic subsequence. This contrast is what yields the additional one-state improvement. The quadratic scale of the construction comes primarily from the real dimension $n^2-1$ of the traceless Hermitian operator space, while dynamic shattering forces the final additional state.

Fix $n\ge2$ and set
\[
D=n^2-1,
\qquad
q=\lfloor D/2\rfloor,
\qquad
\Sigma_4=\{\mathtt{a},\lp,\lt,\mathtt{h}\}.
\]
Let $e_1,\ldots,e_D$ be the standard basis of $\mathbb R^D$, and let $J$ be the nilpotent shift $Je_i=e_{i+1}$ for $i<D$ and $Je_D=0$. Define $R\in O(D)$ by rotating $\operatorname{span}\{e_{2j-1},e_{2j}\}$ through angle $\pi/4^j$ for $1\le j\le q$ and, when $D$ is odd, acting as $-1$ on $e_D$. Let $u\in\mathbb R^D$ have unit projection on the $j$-th rotation plane at angle $\pi/(2\cdot4^j)$,
\[
u_{2j-1}=\cos\frac{\pi}{2\cdot4^j},
\qquad
u_{2j}=\sin\frac{\pi}{2\cdot4^j},
\]
and put $u_D=1$ when $D$ is odd.

\begin{lemma}\label{lem:orbit}
For every $s\in\{\pm1\}^D$, there is an integer $N_s\ge0$ such that
\[
\sgn\bigl((R^{N_s}u)_i\bigr)=s_i
\qquad(1\le i\le D).
\]
\end{lemma}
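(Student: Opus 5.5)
The plan is to compute $R^Nu$ explicitly plane by plane and read off the signs from a base-$4$ expansion. On the $j$-th rotation plane $\operatorname{span}\{e_{2j-1},e_{2j}\}$ the vector $u$ has unit projection at angle $\pi/(2\cdot4^j)$, and $R$ rotates that plane by $\pi/4^j$. Hence
\[
(R^Nu)_{2j-1}=\cos\theta_j(N),\qquad (R^Nu)_{2j}=\sin\theta_j(N),\qquad
\theta_j(N)=\frac{\pi(2N+1)}{2\cdot4^j}.
\]
When $D$ is odd, we also have $(R^Nu)_D=(-1)^N$. Writing $M=2N+1$, we get $\theta_j(N)/(\pi/2)=M/4^j$. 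Since $M$ is odd, this ratio is never an integer. So $\theta_j(N)$ never lies on a coordinate axis, and all coordinates in the $j$-th plane are nonzero.

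Next I would identify the sign pair $\bigl(\sgn\cos\theta_j,\sgn\sin\theta_j\bigr)$ with the open quadrant containing $\theta_j$ modulo $2\pi$. The index of that quadrant is $\lfloor M/4^j\rfloor \bmod 4$. Writing $M=\sum_{k\ge0}d_k4^k$ in base $4$, this index is exactly the digit $d_j$. The map from digits to sign pairs is a bijection:
\[
0\mapsto(+,+),\qquad 1\mapsto(-,+),\qquad 2\mapsto(-,-),\qquad 3\mapsto(+,-).
\]
So for a target $s\in\{\pm1\}^D$, each pair $(s_{2j-1},s_{2j})$ with $1\le j\le q$ dictates a unique digit $d_j\in\{0,1,2,3\}$.

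The digits $d_1,\ldots,d_q$ can be prescribed independently. The remaining constraint is that $M$ be odd, i.e.\ $d_0\in\{1,3\}$. This leaves one binary choice free, which handles the extra coordinate when $D$ is odd. From $N=(M-1)/2=(d_0-1)/2+2\sum_{k\ge1}d_k4^{k-1}$, the parity of $N$ equals the parity of $(d_0-1)/2$. Thus $d_0=1$ gives $N$ even and $(R^Nu)_D=+1$, while $d_0=3$ gives $N$ odd and $(R^Nu)_D=-1$. So we choose $d_0$ according to $s_D$. When $D$ is even, we may take $d_0=1$.

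Finally, set $M=\sum_{k=0}^{q}d_k4^k$ and $N_s=(M-1)/2\ge0$; this has the required sign pattern.

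The only delicate step is the quadrant bookkeeping: checking that $\lfloor M/4^j\rfloor \bmod 4$ really equals the digit $d_j$, and that the four quadrants map bijectively onto the four sign pairs. Both are routine once the angle formula is written down. Oddness of $M$ is what guarantees strict (nonzero) signs.
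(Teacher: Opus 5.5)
Your proposal is correct and follows essentially the same route as the paper. Both arguments track the angle $\frac{\pi}{2}\cdot\frac{2N+1}{4^j}$ in each rotation plane, read the quadrant off the $j$-th base-$4$ digit of $M=2N+1$, and use the odd units digit $d_0\in\{1,3\}$ to fix the parity of $N$, which sets the sign of the last coordinate when $D$ is odd; your explicit checks that $\lfloor M/4^j\rfloor \bmod 4=d_j$ and $N\equiv (d_0-1)/2 \pmod 2$ spell out steps the paper leaves implicit.
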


\begin{proof}
Number the open quadrants counterclockwise from the first by $0,1,2,3$. In the $j$-th plane, the angle of $R^Nu$ is
\[
\frac{\pi}{4^j}\left(N+\frac12\right)
=\frac{\pi}{2}\frac{2N+1}{4^j}.
\]
Because $2N+1$ is odd, the point never lies on a coordinate axis, and its quadrant is the $j$-th base-$4$ digit of $2N+1$. Choose digits $d_j\in\{0,1,2,3\}$ whose quadrants realize the prescribed sign pairs. If $D$ is odd, choose the units digit $d_0=1$ for $s_D=+1$ and $d_0=3$ for $s_D=-1$; if $D$ is even, take $d_0=1$. With
\[
K=\sum_{j=0}^{q}d_j4^j,
\qquad
N_s=\frac{K-1}{2},
\]
all planar signs are the prescribed ones, and in the odd-dimensional case $(-1)^{N_s}=s_D$.
\end{proof}

\begin{theorem}\label{thm:lower}
For every $n\ge2$, there exist an $n$-state 1gQFA $Q_n$ over the fixed alphabet $\Sigma_4$ and a strict cutpoint $\lambda_n$ such that $L(Q_n,\lambda_n)$ shatters $n^2$ prefixes. The same prefixes are dynamically shattered with respect to $\mathtt{a}$. Consequently, every PFA recognizing $L(Q_n,\lambda_n)$ under a strict cutpoint has at least $n^2+1$ states.
\end{theorem}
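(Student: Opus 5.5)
The plan is to realize, inside the $D=n^2-1$ dimensional space of traceless Hermitian operators, a linear dynamics whose sign patterns are controlled by \Cref{lem:orbit}, and then to invoke \Cref{lem:shattering}(2) with $N=n^2=D+1$. The final claim of the theorem then follows directly from that lemma, so all the work is in the construction.

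\textbf{Channels.} Fix a Hilbert--Schmidt orthonormal basis $B_1,\ldots,B_D$ of the traceless Hermitian operators and identify the traceless part $\rho-I/n$ with a vector $v(\rho)\in\R^D$. For a linear map $M$ on $\R^D$, a vector $\beta\in\R^D$ and a small constant $c>0$, I will use the affine channels
\[
\Phi_{M,\beta}(X)=\tr(X)\Bigl(\tfrac{I}{n}+c\textstyle\sum_i\beta_iB_i\Bigr)+c\,M\bigl(X-\tr(X)\tfrac{I}{n}\bigr).
\]
These maps are trace preserving. Their Choi matrix is a small perturbation of the strictly positive matrix $I/n\otimes I$, so each is completely positive once $c$ is small enough. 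With finitely many letters, one common $c$ works for all of them.

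\textbf{Letters and prefixes.} I intend to use the letters as follows.
\begin{itemize}
\item $\lp$ is a reset $X\mapsto\tr(X)(I/n+cB_1)$, i.e.\ $M=0$.
\item $\lt$ acts by the shift $J$, so $v(\rho_{\lp\lt^{i-1}})=c^{i}e_i$.
\item $\mathtt{h}$ adds a small affine offset and, through its adjoint, turns the measurement functional $X\mapsto\tr((P_{\mathrm{acc}}-\lambda_nI)X)$ into the direction $u$ of \Cref{lem:orbit}.
\item $\mathtt a$ acts by the rotation $R$ on the traceless part.
\end{itemize}
I take $\lambda_n=\tr(P_{\mathrm{acc}})/n$, so that $f_Q(w)-\lambda_n$ is exactly a linear functional of $v(\rho_w)$. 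The $n^2$ prefixes are $x_0=\lp$ (traceless part $0$) together with $x_i=\lp\lt^{i}$, shifted so that their traceless parts are $c^{i}e_i$, $1\le i\le D$.

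\textbf{Suffixes.} A suffix of the form $y=\mathtt a^{N}\mathtt h$ produces a value
\[
f_Q(x_iy)-\lambda_n=\alpha+c^{i}c^{N+1}\bigl\langle R^{-N}\tilde u,e_i\bigr\rangle ,
\]
with $\alpha$ the offset coming from $\mathtt h$. Because every angle of $R$ is a rational multiple of $\pi$, $R$ has finite order, so $R^{-N}=R^{N'}$ for some $N'\ge0$. \Cref{lem:orbit} then realizes every sign vector on the coordinates $i=1,\ldots,D$. The label of $x_0$ is governed by the sign of $\alpha$ alone. To obtain both signs of $\alpha$, I will use two variants of $\mathtt h$-suffixes, for instance $\mathtt h$ versus $\mathtt h\mathtt a^{k}\mathtt h$, and choose the offset magnitude tiny relative to $c^{D+N+1}\min_i|(R^{N'}u)_i|$. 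This minimum is bounded below because only finitely many $N'$ are involved. This gives shattering of all $n^2$ prefixes.

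\textbf{Main obstacle: dynamic shattering.} The sequence $f(\mathtt a^{qd}y_\ell)$ starts from $\rho_0$, and it must cross the cutpoint in both directions infinitely often for every $d\ge1$. A finite-order rotation cannot do this: when $d$ is a multiple of its order the sequence is constant. Contractive affine dynamics cannot do it either, since it converges to a fixed sign. I would therefore modify $\mathtt a$ to carry, in addition, an infinite-order persistent oscillation on the component of $\rho_0$. Concretely, I would take a unitary conjugation with rationally independent phase differences on the coherences. I would put $\rho_0$ exactly on the cutpoint level apart from a coherence term, so that the deviation is $\mathrm{Re}(z e^{iqd\theta})$ with $\theta/\pi$ irrational. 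This changes sign infinitely often along every progression $qd$, and is undamped because the map is unitary. The delicate point is to make this coexist with the finite-order sign control used for shattering. I would first try to let the oscillation live in a direction that the suffix maps $\mathtt h$ and $\lt$ read off with a fixed nonzero weight, and to keep the shattering patterns separated from it by the scale $c$.
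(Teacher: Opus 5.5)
Your proposal has a genuine gap, and it is in the step that raises the bound from $n^2$ to $n^2+1$: dynamic shattering is flagged but never constructed.

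\textbf{Why your letter assignment fails the dynamic condition.} In your setup, $\mathtt a$ is the $c$-scaled channel realizing $R$, and it occurs inside every shattering suffix $\mathtt a^N\mathtt h$. It contracts the traceless part by $c^{k}$. So, as you note yourself, $f(\mathtt a^{qd}y_\ell)$ converges to the acceptance value of $y_\ell$ started from $I/n$, and the dynamic condition fails. Making $\mathtt a$ an irrational unitary would remove the finite-order sign control that your suffixes depend on. Your four letters are already committed to reset, shift, offset and $R$, so no letter is left to take over either role.

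\textbf{What actually has to oscillate.} It is not $\rho_{\mathtt a^{qd}}$ but $f(\mathtt a^{qd}y_\ell)-\lambda_n$, measured \emph{after} the suffix, and this must hold for each of the $2^{n^2}$ suffixes. Each $y_\ell$ must therefore transmit the oscillation with amplitude strictly larger than the offset $y_\ell$ itself adds. Placing $\rho_0$ on the cutpoint level and ``separating by the scale $c$'' does not guarantee this. You need a structural reason why the read-out weight of the oscillating direction never degenerates for any $y_\ell$.

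\textbf{How the paper supplies this.}
\begin{enumerate}
\item There is no reset letter. The initial state $\rho_0=\tau+\eta X$, with $X=T(e_1)$, generates the prefixes $x_i=\lp^{i-1}$, where $\lp$ acts by $\varepsilon J$ and $x_0=\lp^D$ lands on $\tau$. The same $\rho_0$ also carries the coherence that will oscillate.
\item The contraction $\varepsilon R^\top$ is put on $\lt$, which is used only in suffixes.
\item $\mathtt a$ is a genuine unitary, $U=\operatorname{diag}(e^{-i\theta/2},e^{i\theta/2},1,\ldots,1)$ with $\theta/(2\pi)$ irrational. It rotates $X$ within $\operatorname{span}\{X,Y\}=T(\operatorname{span}\{e_1,e_2\})$.
\item That plane is exactly the first rotation plane of $R$, and $u$ has unit projection on it. Hence $f(\mathtt a^k\lt^{N_s})-\lambda_0=\eta\varepsilon^{N_s}[(R^{N_s}u)_1\cos k\theta+(R^{N_s}u)_2\sin k\theta]$ has amplitude exactly $\eta\varepsilon^{N_s}$ for every $s$.
\item The cutpoint shift $\delta$ is then chosen below all of these amplitudes and below all static deviations. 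Density of $qd\theta$ modulo $2\pi$ gives crossings in both directions along every progression.
\item The label of the zero prefix comes from the exact flip $f(w\mathtt h)-\lambda_n=-\gamma(f(w)-\lambda_n)$, used together with the complementary pattern $N_{-s}$. This flip also preserves the oscillation.
\end{enumerate}

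\textbf{Your zero-prefix device is also unjustified.} You propose $\mathtt h$ versus $\mathtt h\mathtt a^k\mathtt h$. On the zero prefix, the second suffix gives offset $\alpha$ plus terms of higher order in $c$. There is therefore no reason for its sign to differ from that of $\alpha$. The patterns on $x_1,\ldots,x_D$ would also have to survive the extra letters, which you do not check.

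\textbf{Minor indexing slip.} The prefix $\lp$ alone has traceless part $cB_1$, not $0$. Your zero prefix should be $\lp\lt^{D}$.
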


\begin{proof}
Let $\mathcal H=\mathbb C^n$, $\tau=I/n$, and
\[
V=\{H\in\operatorname{Herm}(\mathcal H):\tr H=0\},
\qquad \dim V=D.
\]
On $\operatorname{span}\{\ket1,\ket2\}$, set
\[
X=\ket1\bra2+\ket2\bra1,
\qquad
Y=-i\ket1\bra2+i\ket2\bra1,
\]
and define
\[
\ket\psi=\frac{\ket1+e^{i\pi/8}\ket2}{\sqrt2},
\qquad
P_{\mathrm{acc}}=\ket\psi\bra\psi,
\qquad
\lambda_0=\tr(P_{\mathrm{acc}}\tau)=\frac1n.
\]
The functional $\zeta(H)=\tr(P_{\mathrm{acc}}H)$ satisfies
\[
\zeta(X)=\cos\frac\pi8=u_1,
\qquad
\zeta(Y)=\sin\frac\pi8=u_2.
\]
Because $\zeta(X)=u^\top e_1$ and $\zeta(Y)=u^\top e_2$, the assignment $e_1\mapsto X$ and $e_2\mapsto Y$ equates the nontrivial functionals $x\mapsto u^\top x$ and $H\mapsto\zeta(H)$ on $\operatorname{span}\{e_1,e_2\}$. Because both functionals have $(D-1)$-dimensional kernels, this assignment extends to a full real-linear isomorphism $T:\mathbb R^D\to V$ that intertwines them everywhere, yielding
\begin{equation}\label{eq:T-fixed}
T(e_1)=X,
\qquad
T(e_2)=Y,
\qquad
\zeta(Tx)=u^\top x.
\end{equation}

Define
\[
L_{\lp}=TJT^{-1},
\qquad
L_{\lt}=TR^\top T^{-1}
\]
on $V$, and let $\widehat L_{\lp},\widehat L_{\lt}$ be their complex-linear extensions to the traceless operators. For $\sigma\in\{\lp,\lt\}$, set
\[
\mathcal E_\sigma(A)
=\tr(A)\tau
+\varepsilon\widehat L_\sigma\bigl(A-\tr(A)\tau\bigr).
\]
Since $L_\sigma$ maps Hermitian operators to Hermitian operators, its complex-linear extension $\widehat L_\sigma$ is Hermiticity-preserving. At $\varepsilon=0$, $\mathcal E_\sigma$ is the replacement channel $A\mapsto\tr(A)\tau,$
whose Choi matrix $I\otimes\tau$ is positive definite. The Choi matrix of $\mathcal E_\sigma$ is therefore Hermitian and depends continuously on $\varepsilon$, so it remains positive definite for a common sufficiently small $0<\varepsilon<1$. Hence $\mathcal E_\sigma$ is completely positive~\cite{choi1975}; trace preservation follows because $\widehat L_\sigma$ maps traceless operators to traceless operators. For this choice,
\begin{equation}\label{eq:pt-fixed}
\mathcal E_{\lp}(\tau+T(x))=\tau+\varepsilon T(Jx),
\qquad
\mathcal E_{\lt}(\tau+T(x))=\tau+\varepsilon T(R^\top x).
\end{equation}
Choose $0<\eta<1/n$ and take $\rho_0=\tau+\eta X$, which is positive definite.

Let $\theta/(2\pi)$ be irrational and define $\mathcal E_{\mathtt a}(\rho)=U\rho U^\dagger$, where
\[
U=\operatorname{diag}(e^{-i\theta/2},e^{i\theta/2},1,\ldots,1).
\]
Then $UXU^\dagger=\cos\theta\,X+\sin\theta\,Y$, so
\begin{equation}\label{eq:a-orbit}
\rho_{\mathtt{a}^k}
=\tau+\eta T\bigl(\cos(k\theta)e_1+\sin(k\theta)e_2\bigr).
\end{equation}

For $1\le i\le D$, let $x_i=\lp^{i-1}$ and let $x_0=\lp^D$. Since
$\rho_0=\tau+\eta T(e_1)$, repeated application of \cref{eq:pt-fixed} gives,
for every $s\in\{\pm1\}^D$,
\[
\mathcal E_{\lt}^{N_s}\!\left(
\mathcal E_{\lp}^{\,i-1}(\rho_0)\right)
=
\tau+\eta\varepsilon^{i-1+N_s}
T\!\left((R^\top)^{N_s}e_i\right)
\qquad(1\le i\le D),
\]
whereas
\[
\mathcal E_{\lt}^{N_s}\!\left(
\mathcal E_{\lp}^{\,D}(\rho_0)\right)=\tau
\]
because $J^De_1=0$. Taking the accepting measurement and using
\cref{eq:T-fixed} therefore gives
\[
\tr\!\left(
P_{\mathrm{acc}}\,
\mathcal E_{\lt}^{N_s}
\bigl(\mathcal E_{\lp}^{\,i-1}(\rho_0)\bigr)
\right)-\lambda_0
=
\eta\varepsilon^{i-1+N_s}(R^{N_s}u)_i.
\]
Likewise, \cref{eq:T-fixed,eq:pt-fixed,eq:a-orbit} gives
\[
\tr\!\left(
P_{\mathrm{acc}}\,
\mathcal E_{\lt}^{N_s}
\bigl(\mathcal E_{\mathtt a}^{\,k}(\rho_0)\bigr)
\right)-\lambda_0
=
\eta\varepsilon^{N_s}
\left[
(R^{N_s}u)_1\cos(k\theta)
+
(R^{N_s}u)_2\sin(k\theta)
\right].
\]
The sinusoid on the right has amplitude $\eta\varepsilon^{N_s}$, since
the projection of $R^{N_s}u$ onto the first rotation plane has norm $1$.

By \Cref{lem:orbit}, every $(R^{N_s}u)_i$ is nonzero. Since only finitely
many $s$ occur, choose $0<\delta<1-\lambda_0$ so small that
\[
\delta<
\eta\varepsilon^{i-1+N_s}|(R^{N_s}u)_i|
\quad
\text{for all $s\in\{\pm1\}^D$ and $1\le i\le D$},
\]
and
\[
\delta<\eta\varepsilon^{N_s}
\quad
\text{for all $s\in\{\pm1\}^D$}.
\]
Set $\lambda_n=\lambda_0+\delta$.

By construction, the acceptance probability of the anchor prefix $x_0$ is strictly below the cutpoint $\lambda_n$ for every suffix $\lt^{N_s}$. Thus, we introduce a fourth letter $\mathtt{h}$ that inverts the acceptance probability across $\lambda_n$. Since $0<\lambda_n<1$, choose
$\gamma>0$ sufficiently small that
\[
H=\lambda_n(1+\gamma)I-\gamma P_{\mathrm{acc}}
\]
satisfies $0<H<I$. Let
\[
\ket{\psi_\perp}
=
\frac{\ket1-e^{i\pi/8}\ket2}{\sqrt2}
\]
and define
\[
\mathcal E_{\mathtt h}(\rho)
=
\tr(H\rho)\ket\psi\bra\psi
+
\tr((I-H)\rho)\ket{\psi_\perp}\bra{\psi_\perp}.
\]
This measure-and-prepare map is a quantum channel. Hence
\[
Q_n=
(\mathcal H,\Sigma_4,\rho_0,
\{\mathcal E_{\mathtt a},\mathcal E_{\lp},
\mathcal E_{\lt},\mathcal E_{\mathtt h}\},
P_{\mathrm{acc}})
\]
is a well-defined $n$-state 1gQFA. Since $\ket\psi$ is accepted with
probability $1$ and $\ket{\psi_\perp}$ with probability $0$, every word
$w$ satisfies
\begin{equation}\label{eq:exact-flip}
f_{Q_n}(w\mathtt h)-\lambda_n
=
-\gamma\bigl(f_{Q_n}(w)-\lambda_n\bigr).
\end{equation}

By our choice of $\delta$, the signs of the static deviations yield
\begin{equation}\label{eq:anchored-halfcube}
x_i\lt^{N_s}\in L(Q_n,\lambda_n)
\iff s_i=+1
\quad(1\le i\le D),
\qquad
x_0\lt^{N_s}\notin L(Q_n,\lambda_n),
\end{equation}
and
\begin{equation}\label{eq:dynamic-fixed}
f_{Q_n}(\mathtt a^k\lt^{N_s})-\lambda_n
=
\eta\varepsilon^{N_s}
\left[
(R^{N_s}u)_1\cos(k\theta)
+
(R^{N_s}u)_2\sin(k\theta)
\right]
-\delta.
\end{equation}
The sinusoidal amplitude in \cref{eq:dynamic-fixed} exceeds $\delta$.
For every $d\ge1$, $d\theta/(2\pi)$ is irrational, so the phases
$qd\theta$ are dense modulo $2\pi$. Hence
\cref{eq:dynamic-fixed} is positive infinitely often and negative
infinitely often along $k=qd$.

For
$\ell=(\ell_0,\ell_1,\ldots,\ell_D)\in\{\pm1\}^{D+1}$, define
\[
y_\ell=
\begin{cases}
\lt^{N_{(\ell_1,\ldots,\ell_D)}},
&\ell_0=-1,\\
\lt^{N_{(-\ell_1,\ldots,-\ell_D)}}\mathtt h,
&\ell_0=+1.
\end{cases}
\]
Then \cref{eq:anchored-halfcube,eq:exact-flip} give
\[
x_i y_\ell\in L(Q_n,\lambda_n)
\iff \ell_i=+1
\qquad(0\le i\le D).
\]
Thus the $D+1=n^2$ prefixes are shattered in the sense of
\Cref{def:shattering}. By the first part of \Cref{lem:shattering}, this static part of the
construction alone already forces at least $n^2$ states in any PFA
recognizing $L(Q_n,\lambda_n)$ under a strict cutpoint.

\Cref{eq:dynamic-fixed,eq:exact-flip} also show that every $y_\ell$
has the required oscillation along every arithmetic progression.
Hence the same $n^2$ prefixes are dynamically shattered with respect to
$\mathtt a$. By the second part of \Cref{lem:shattering}, the lower bound strengthens
from $n^2$ to $n^2+1$.
\end{proof}

\begin{corollary}\label{cor:main}
$C_1=1$, and for every $n\ge2$,
\[
C_n=n^2+1.
\]
For $n\ge2$, the lower bound is attained over the same four-letter alphabet $\Sigma_4$.
\end{corollary}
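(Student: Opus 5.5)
The corollary splits into the case $n\ge2$, which is assembled from earlier results, and the degenerate case $n=1$, which I will check directly.

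\emph{Case $n\ge2$.} The upper bound $C_n\le n^2+1$ is \cref{eq:C-upper}, which comes from \Cref{thm:upper}. For the lower bound I take the automaton $Q_n$ and cutpoint $\lambda_n$ from \Cref{thm:lower}. Let $P$ be any PFA and $\mu$ any strict cutpoint with $L(P,\mu)=L(Q_n,\lambda_n)$. By \Cref{thm:lower}, this language is dynamically shattered on $n^2$ prefixes, so $P$ has at least $n^2+1$ states. Hence any admissible value of $C_n$ is at least $n^2+1$, and so $C_n=n^2+1$.

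One detail needs care: the definition of $C_n$ requires $\lambda\in[0,1)$. I must confirm that $\lambda_n=\lambda_0+\delta$ lies in this interval. It does, because $\delta<1-\lambda_0$ and $\lambda_0=1/n>0$. The alphabet statement is then immediate, since $Q_n$ is defined over $\Sigma_4$.

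\emph{Case $n=1$.} Here $\mathcal H=\C$. The only density operator is $1$, every channel is the identity, and $P_{\mathrm{acc}}\in\{0,1\}$. So $f_Q$ is constantly $0$ or constantly $1$, and for $\lambda\in[0,1)$ the language $L(Q,\lambda)$ is either $\emptyset$ or $\Sigma^\ast$. A one-state PFA recognizes each of these: take accepting set $F=\emptyset$ or $F=S$, with cutpoint $\mu=0$. This gives $C_1\le1$. Every PFA has at least one state, since $p_0$ is a distribution on a nonempty set, so $C_1\ge1$.

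The main obstacle is not new mathematics but the bookkeeping in the definition. $C_n$ is a minimum over a quantifier over all $Q$ and $\lambda$, so a single bad pair $(Q_n,\lambda_n)$ is enough for the lower bound. Before concluding, I will check that the cutpoint $\lambda_n$ falls in the admissible range $[0,1)$, as verified above.
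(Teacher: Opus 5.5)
Your proposal is correct and follows the paper's own (implicit) argument: it combines the upper bound \cref{eq:C-upper} from \Cref{thm:upper} with the single hard instance $(Q_n,\lambda_n)$ of \Cref{thm:lower}. You also fill in two details the paper leaves unstated, and both are right: that $\lambda_n=\lambda_0+\delta$ lies in $[0,1)$, and that for $n=1$ every 1gQFA has constant acceptance probability $0$ or $1$, so a one-state PFA suffices.
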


\section{Conclusion}\label{sec:conclusion}

This paper determines the exact worst-case classical simulation cost of one-way general quantum finite automata under strict cutpoints. The improved zero-cutpoint conversion gives the upper bound $C_n\le n^2+1$, while the shattering construction gives the matching lower bound, resolving the open problem with $C_n=n^2+1$. The exact lower bound is realized over a fixed four-letter alphabet. It remains open whether $n^2+1$ can be attained over a smaller alphabet. Moreover, the cutpoints $\lambda_n$ in \Cref{thm:lower} tend to zero, and it is unknown whether the same exact lower bound can be obtained with a fixed cutpoint independent of $n$. These questions concern refinements of the extremal construction rather than the exact state complexity established here.

A natural next direction is to determine analogous exact simulation costs for other one-way QFA models. Hybrid quantum--classical models require the classical and quantum state resources to be treated jointly. For more restricted models such as measure-once QFAs, the relevant state-vector geometry has only linear dimension, so it is unclear whether the quadratic simulation cost persists. Resolving such cases may require substantially different techniques.

Finally, shattering and dynamic shattering are language-theoretic notions that may be useful beyond the present 1gQFA setting. Their prefix--suffix viewpoint is closely related to classical rank-based methods for automata and to sign-rank techniques in unbounded-error communication complexity \cite{forster2002}. Whether these viewpoints admit a deeper and more useful connection remains open.

\printbibliography

@article{rabin1963,
  author  = {Michael O. Rabin},
  title   = {Probabilistic Automata},
  journal = {Information and Control},
  volume  = {6},
  number  = {3},
  pages   = {230--245},
  year    = {1963},
  doi     = {10.1016/S0019-9958(63)90290-0}
}

@book{paz1971,
  author    = {Azaria Paz},
  title     = {Introduction to Probabilistic Automata},
  publisher = {Academic Press},
  year      = {1971}
}

@article{turakainen1969,
  author  = {Paavo Turakainen},
  title   = {Generalized Automata and Stochastic Languages},
  journal = {Proceedings of the American Mathematical Society},
  volume  = {21},
  number  = {2},
  pages   = {303--309},
  year    = {1969},
  doi     = {10.1090/S0002-9939-1969-0242596-1}
}

@inproceedings{kondacs1997,
  author    = {Attila Kondacs and John Watrous},
  title     = {On the Power of Quantum Finite State Automata},
  booktitle = {Proceedings of the 38th Annual Symposium on Foundations of Computer Science},
  pages     = {66--75},
  publisher = {IEEE Computer Society},
  year      = {1997},
  doi       = {10.1109/SFCS.1997.646094}
}

@article{li2012,
  author  = {Lvzhou Li and Daowen Qiu and Xiangfu Zou and Lvjun Li and Lihua Wu and Paulo Mateus},
  title   = {Characterizations of One-Way General Quantum Finite Automata},
  journal = {Theoretical Computer Science},
  volume  = {419},
  pages   = {73--91},
  year    = {2012},
  doi     = {10.1016/j.tcs.2011.10.021}
}

@inproceedings{yakaryilmaz2010,
  author    = {Abuzer Yakary{\i}lmaz and A. C. Cem Say},
  title     = {Languages Recognized with Unbounded Error by Quantum Finite Automata},
  booktitle = {Computer Science -- Theory and Applications},
  series    = {Lecture Notes in Computer Science},
  volume    = {5675},
  pages     = {356--367},
  year      = {2009},
  publisher = {Springer},
  doi       = {10.1007/978-3-642-03351-3_33}
}

@article{yakaryilmaz2011,
  author  = {Abuzer Yakary{\i}lmaz and A. C. Cem Say},
  title   = {Unbounded-Error Quantum Computation with Small Space Bounds},
  journal = {Information and Computation},
  volume  = {209},
  number  = {6},
  pages   = {873--892},
  year    = {2011},
  doi     = {10.1016/j.ic.2011.01.008}
}

@article{moore2000,
  author  = {Cristopher Moore and James P. Crutchfield},
  title   = {Quantum Automata and Quantum Grammars},
  journal = {Theoretical Computer Science},
  volume  = {237},
  number  = {1--2},
  pages   = {275--306},
  year    = {2000},
  doi     = {10.1016/S0304-3975(98)00191-1}
}

@inproceedings{ambainis1998,
  author    = {Andris Ambainis and Rūsiņš Freivalds},
  title     = {1-Way Quantum Finite Automata: Strengths, Weaknesses and Generalizations},
  booktitle = {Proceedings of the 39th Annual Symposium on Foundations of Computer Science},
  pages     = {332--341},
  publisher = {IEEE Computer Society},
  year      = {1998},
  doi       = {10.1109/SFCS.1998.743469}
}

@article{brodsky2002,
  author  = {Alex Brodsky and Nicholas Pippenger},
  title   = {Characterizations of 1-Way Quantum Finite Automata},
  journal = {SIAM Journal on Computing},
  volume  = {31},
  number  = {5},
  pages   = {1456--1478},
  year    = {2002},
  doi     = {10.1137/S0097539799353443}
}

@article{turakainen1975,
  author  = {Paavo Turakainen},
  title   = {Word-Functions of Stochastic and Pseudo Stochastic Automata},
  journal = {Annales Academiae Scientiarum Fennicae. Series A I. Mathematica},
  volume  = {1},
  number  = {1},
  pages   = {27--37},
  year    = {1975},
  doi     = {10.5186/aasfm.1975.0126}
}

@article{hirvensaloyakaryilmaz2016,
  author  = {Mika Hirvensalo and Abuzer Yakary{\i}lmaz},
  title   = {Decision Problems on Unary Probabilistic and Quantum Automata},
  journal = {Baltic Journal of Modern Computing},
  volume  = {4},
  number  = {4},
  pages   = {965--976},
  year    = {2016},
  doi     = {10.22364/bjmc.2016.4.4.22}
}

@incollection{sayyakaryilmaz2014,
  author    = {A. C. Cem Say and Abuzer Yakary{\i}lmaz},
  title     = {Quantum Finite Automata: A Modern Introduction},
  booktitle = {Computing with New Resources: Essays Dedicated to Jozef Gruska on the Occasion of His 80th Birthday},
  series    = {Lecture Notes in Computer Science},
  volume    = {8808},
  pages     = {208--222},
  publisher = {Springer},
  year      = {2014},
  doi       = {10.1007/978-3-319-13350-8_16}
}

@incollection{ambainisyakaryilmaz2021,
  author    = {Andris Ambainis and Abuzer Yakary{\i}lmaz},
  title     = {Automata and Quantum Computing},
  booktitle = {Handbook of Automata Theory},
  volume    = {2},
  pages     = {1457--1493},
  publisher = {EMS Press},
  year      = {2021},
  doi       = {10.4171/AUTOMATA-2/17}
}

@article{zheng2014,
  author  = {Shenggen Zheng and Jozef Gruska and Daowen Qiu},
  title   = {On the State Complexity of Semi-Quantum Finite Automata},
  journal = {RAIRO -- Theoretical Informatics and Applications},
  volume  = {48},
  number  = {2},
  pages   = {187--207},
  year    = {2014},
  doi     = {10.1051/ita/2014003}
}

@article{qiu2015,
  author  = {Daowen Qiu and Lvzhou Li and Paulo Mateus and Amilcar Sernadas},
  title   = {Exponentially More Concise Quantum Recognition of Non-{RMM} Regular Languages},
  journal = {Journal of Computer and System Sciences},
  volume  = {81},
  number  = {2},
  pages   = {359--375},
  year    = {2015},
  doi     = {10.1016/j.jcss.2014.06.008}
}

@article{mereghetti2020,
  author  = {Carlo Mereghetti and Beatrice Palano and Simone Cialdi and Valeria Vento and Matteo G. A. Paris and Stefano Olivares},
  title   = {Photonic Realization of a Quantum Finite Automaton},
  journal = {Physical Review Research},
  volume  = {2},
  number  = {1},
  eid     = {013089},
  year    = {2020},
  doi     = {10.1103/PhysRevResearch.2.013089}
}

@article{mishra2020,
  author  = {Nimish Mishra and {Rayala Sarath Chandra} and Bikash K. Behera and Prasanta K. Panigrahi},
  title   = {Automation of Quantum Braitenberg Vehicles Using Finite Automata: Moore Machines},
  journal = {Quantum Information Processing},
  volume  = {19},
  number  = {1},
  eid     = {17},
  year    = {2020},
  doi     = {10.1007/s11128-019-2512-2}
}

@article{xiaoqiu2025,
  author  = {Ligang Xiao and Daowen Qiu},
  title   = {State Complexity of One-Way Quantum Finite Automata Together with Classical States},
  journal = {Journal of Computer and System Sciences},
  volume  = {154},
  eid     = {103659},
  year    = {2025},
  doi     = {10.1016/j.jcss.2025.103659}
}

@book{watrous2018,
  author    = {John Watrous},
  title     = {The Theory of Quantum Information},
  publisher = {Cambridge University Press},
  year      = {2018},
  doi       = {10.1017/9781316848142}
}

@article{choi1975,
  author  = {Man-Duen Choi},
  title   = {Completely Positive Linear Maps on Complex Matrices},
  journal = {Linear Algebra and its Applications},
  volume  = {10},
  number  = {3},
  pages   = {285--290},
  year    = {1975},
  doi     = {10.1016/0024-3795(75)90075-0}
}

@inproceedings{rote2025,
  author    = {G{\"u}nter Rote},
  title     = {Probabilistic Finite Automaton Emptiness Is Undecidable for a Fixed Automaton},
  booktitle = {50th International Symposium on Mathematical Foundations of Computer Science (MFCS 2025)},
  series    = {Leibniz International Proceedings in Informatics},
  volume    = {345},
  pages     = {86:1--86:18},
  publisher = {Schloss Dagstuhl -- Leibniz-Zentrum f{\"u}r Informatik},
  year      = {2025},
  doi       = {10.4230/LIPIcs.MFCS.2025.86}
}

@article{forster2002,
  author  = {J{\"u}rgen Forster},
  title   = {A Linear Lower Bound on the Unbounded Error Probabilistic Communication Complexity},
  journal = {Journal of Computer and System Sciences},
  volume  = {65},
  number  = {4},
  pages   = {612--625},
  year    = {2002},
  doi     = {10.1016/S0022-0000(02)00019-3}
}

@inproceedings{diazcaroyakaryilmaz2016,
  author    = {Alejandro D{\'i}az-Caro and Abuzer Yakary{\i}lmaz},
  title     = {Affine Computation and Affine Automaton},
  booktitle = {Computer Science -- Theory and Applications},
  series    = {Lecture Notes in Computer Science},
  volume    = {9691},
  pages     = {146--160},
  publisher = {Springer},
  year      = {2016},
  doi       = {10.1007/978-3-319-34171-2_11}
}

\end{document}